\documentclass[conference]{IEEEtran}
\usepackage{amsthm,amsmath,amssymb,esint,color,graphicx,overpic,bbm,cite,hyperref,wrapfig,acronym,subcaption,latexsym,paralist,xspace,array,multirow,url}
\usepackage{times}\definecolor{light-gray}{gray}{0.95}
\usepackage[left=2cm, right=2cm, top=1.5cm, bottom=1.5cm,includeheadfoot]{geometry}
\usepackage[linesnumbered]{algorithm2e}
\newtheorem{theorem}{Theorem}
\newtheorem{lemma}{Lemma}

\begin{document}
%\title{Taming Massive Connectivity with Optimal Transport}
\title{Computational Optimal Transport for 5G  Massive  C-RAN Device Association}

%\newcounter{one}
%\setcounter{one}{1}
%\newcounter{two}
%\setcounter{two}{2}
\author{
Georgios Paschos, Nikolaos Liakopoulos, Merouane Debbah, and Tong Wen \\
\vspace{0.1in} Huawei Technologies
%Georgios Paschos and Nikolaos Liakopoulos
%Georgios S. Paschos$^{\fnsymbol{one}}$ \vspace{0.1in}\\
%$^\fnsymbol{one}$
%Mathematical and Algorithmic Laboratory, 
%%France Research Center,
%Huawei Technologies Co.~Ltd., France
}

\maketitle

\addtolength{\floatsep}{-\baselineskip}
\addtolength{\dblfloatsep}{-\baselineskip}
\addtolength{\intextsep}{-\baselineskip}
\addtolength{\textfloatsep}{-\baselineskip}
\addtolength{\dbltextfloatsep}{-\baselineskip}
\addtolength{\abovedisplayskip}{0ex}
\addtolength{\belowdisplayskip}{0ex}
\addtolength{\abovedisplayshortskip}{0ex}
\addtolength{\belowdisplayshortskip}{0ex}
\setlength{\abovecaptionskip}{0ex}
\setlength{\belowcaptionskip}{0ex}

\begin{abstract}

The massive scale of future wireless networks will create computational bottlenecks in performance optimization. In this paper, we study the problem  of connecting mobile traffic to Cloud RAN (C-RAN) stations. To balance station load, we steer the traffic by designing device association rules. % that decide  each device connects to a C-RAN station. %, and hence by designing appropriate rules. 
%Given knowledge about spatial traffic intensity, an association rule connects  each  traffic generating point to a C-RAN station, and the weighted sum of connected traffic determines the  station load. Hence, load balancing can be achieved by carefully selecting the association rules. 
The baseline association rule connects each device  to the station with the strongest signal, which does not account for interference or traffic hot spots, and leads to load imbalances and performance deterioration. Instead,  we can formulate an optimization problem to decide centrally the best  association rule at each time instance. However, in practice this optimization has such high dimensions, that even linear programming solvers fail to solve. To address the challenge of massive connectivity, we propose an approach based on  the theory of optimal transport, which studies the economical transfer of probability between two distributions. 
% and in particular a regularized version of the discrete optimal trasport problem. Then, we combine  it with learning techniques in o
Our proposed methodology can further inspire scalable algorithms for massive optimization problems in wireless networks. 
\end{abstract}

\section{Introduction}

We revisit the problem of \emph{device association}  in the setting of  massive connectivity. 
In this problem, we seek to find a rule to associate mobile traffic to certain serving stations such that the incurred load is balanced across the available stations.
Although a plurality of association methodologies are available in the literature, here we focus  on the underexplored aspect of scalability; we seek load-balancing associations for thousands of devices to hundreds of stations, a setting where  even linear program solvers become cumbersome. 
%The best known alternatives in this case are lightweight distributed schemes. 
To address the arising computational challenge, we propose to use \emph{Optimal Transport} (OT) theory, which  studies the  transfer of   masses over a metric space. Recently, an {entropic regularization}  of OT was shown to provide  superfast algorithms for very large OT instances \cite{Cuturi}, making the framework applicable to a wide range of challenging applications, including  image processing and machine learning. \emph{Our goal in this paper is to use regularized OT to derive association rules for a very large number of wireless devices.}

Our centralized approach is motivated by the upcoming 5G wireless networks. 
According to recent reports, telecom operators are increasingly interested in deploying  Cloud-Radio Access Network (Cloud-RAN, or C-RAN) systems, cf.~\cite{report}. 
The architecture of  C-RAN  economizes computation and signal processing by  migrating the computing part of base stations to a central cloud location, and using simple \emph{Remote Radio Heads} (RRH) to broadcast   signals \cite{CRAN}. 
%the control of base stations  becomes centralized at the cloud, with architectures such as the C-RAN \cite{c-ran}.
Since all the intelligence is now moved to the C-RAN controller, provisioning connectivity in a large geographical area is centrally decided, motivating the centralized  association of devices to   RRHs. 

%\textbf{Why massive scale?}
A contemporary C-RAN controller manages a big number of RRHs and mobile devices, thus the centralized device association is inherently a large scale problem. 
%A C-RAN cell would have today quite a significant  number of RRHs and mobile devices. 
Furthermore, the number of RRHs and of devices are both expected to increase  in the near future. 
On one hand, the deployment of RRHs will become very dense to improve the effective capacity of the network \cite{network_densification}, while 
%the RRHs are expected to increase in order to densify the network and improve its effective capacity 
%On one hand, 
%(a) base stations are densified to improve offered service \cite{network_densification}
%today the number of devices that are connected in an area proportional to a hypothetical C-RAN cell is already quite significant, e.g., bla.
on the other hand,  we expect a huge number of  heterogenous smart  IoT devices to connect to 5G mobile networks by 2020--estimated 20.4 billions in \cite{gartner}.  
%will become connected to the 5G networks via the \emph{Machine-Type Communication} (MTC) class of service.  
Since 5G applications can have radically diverse requirements, the traffic footprint of each connected device (IoT or regular) can  vary significantly. \emph{This motivates us to study the large-scale centralized device association  with potentially different traffic requirements per device.}

\section{The Device Association Problem}

\begin{figure}[t!]
  \centering
    \includegraphics[width=0.4\textwidth]{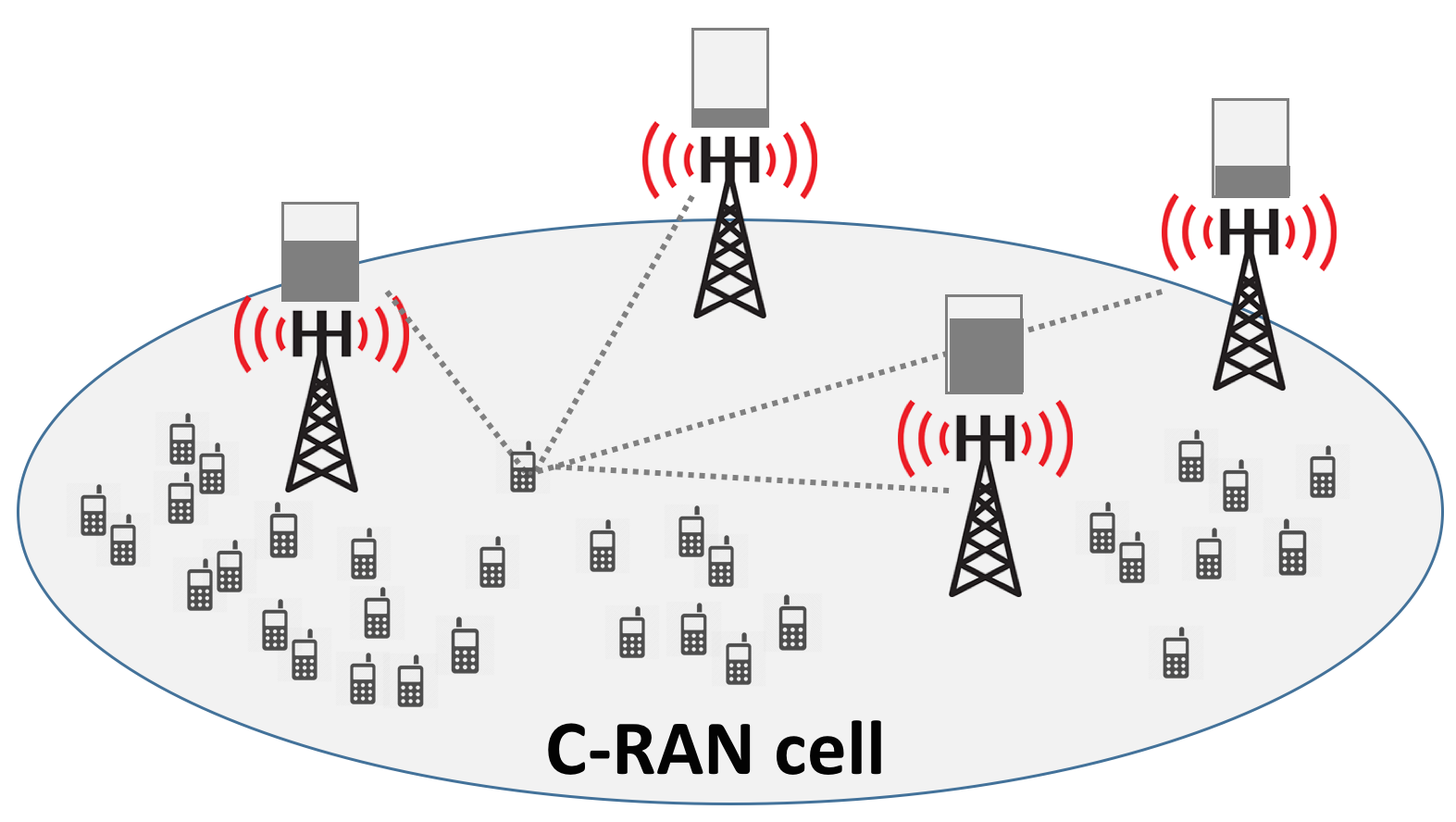}
      \caption{{\small Devices associate to RRHs causing various load levels.}}\vspace{-0.1in}
\end{figure}

\subsection{Downlink Model}
We consider the downlink\footnote{
%The uplink is often less important, and it is avoided here since  it requires a different model. 
Our approach applies also to the uplink as long as  the modeled upload rate $R_{ij}$ can be considered independent of the association rule.} transmissions of a large C-RAN cell, containing $\mathcal{I}$ devices and $\mathcal{J}$ RRHs. We call $\pi_{ij}\in [0,1]$ the association variable, which is the fraction of time device $i\in \mathcal{I}$ connects to RRH $j\in \mathcal{J}$. Our objective is to decide  the association rules $\boldsymbol\pi\equiv (\pi_{ij})$ in order to optimize a network performance metric, such as the sum of RRH load or the average job completion time (delay).
%Here, we mention that  the snapshot problem we consider in this paper can be easily extended to situations where the system evolves over time.

In order to evaluate the impact of an  association rule we consider the \emph{download rate} obtained when device $i$ is connected to station $j$ {while receiving exclusive service}; this is  denoted by $R_{ij}$. This rate should be calculated at the granularity of association changes. Typically, we may assume the use of a temporally-fair scheduler  which distributes the station resources to the different users and averages out  fast fading effects.
 %Ideally, a device would like to connect to the serving station that gives the highest value of $R_{ij}$. 
%, and . includes the interference effects from other base stations. 
%Due to interference effects (which depend on the multiple access technology used) 
In this case, a reasonable model for the download rate is:
\[
R_{ij}=W\log(1+\text{SINR}_{ji}),
\]
where, 
%we use $i$ to index user, $j$ to index the RRH,
 $W$ is the bandwidth,  $\text{SINR}_{ji} \triangleq \frac{P_j G_{ji}}{\sum_{k\neq j}P_k G_{ki} +N_0}$, and the index $ji$ is reversed on purpose to denote directivity,  $P_j$ denotes the transmit power, $G_{ji}$ the path loss,  $N_0$ the strength of thermal noise. This model has been extensively used in the literature, cf.~\cite{deVec,Spyro,Liako}. Here we  use it only as an example;  our framework only requires that  $R_{ij}$ are independent of the association variables.
At this point, it might seem reasonable to connect each device to the RRH that provides the largest $R_{ij}$--known as the \emph{maxSINR rule}--however, this results in poor performance as we explain next.

\subsection{RRH Load}

We suppose that a device $i$ requires to download traffic $\lambda_i$, which is known\footnote{This information  can either be provided by the device, or forecasted by the system.} and device-dependent. 
Also, it downloads jobs with average size $1/\mu$;  extension to device- or RRH-specific job sizes is trivial.  Given a decision $\boldsymbol \pi$ we may determine \emph{the  load of   RRH $j$} as:
\[
\rho_j(\boldsymbol \pi)=\sum_i  \frac{\lambda_i}{\mu R_{ij}}\pi_{ij}.
\]

%where $\lambda_i$ is the traffic intensity generated by device $i$, and $1/\mu$ is the average job size (here assumed for simplicity the same for all devices). Both $\lambda_i$ and $\mu$ are considered known, e.g., through a traffic prediction mechanism. 
Connecting devices to RRHs with highest $R_{ij}$ has the beneficial effect that the load contribution of each device ${\lambda_i}/{\mu R_{ij}}$ is minimized (since the term $R_{ij}$ is maximized). However, when devices are not uniformly distributed in the area and/or total traffic demand is imbalanced, some RRHs attract more connections and become overloaded. 
This will result into poor performance, because wireless service also depends on the competition between users at the associated RRH, and rapidly degrades as $\rho_j(\boldsymbol \pi)\uparrow 1$. 
For example, we may estimate the \emph{average job completion time} by $\frac{E[N_j]+1}{\mu R_{ij}}$, where $E[N_j]$ is the average number of jobs running at RRH $j$. Assuming the jobs are served according to the  \emph{processor sharing discipline}, for $\rho_j(\boldsymbol \pi)<1$, $E[N_j]$ is given by \cite{mor}
\[
E[N_j]=\frac{\rho_j(\boldsymbol \pi)}{1-\rho_j(\boldsymbol \pi)}.
\]
As $\rho_j(\boldsymbol \pi)\uparrow 1$, the average completion time of a job of the connected devices will become very high. 

Summing up the associations of device $i$, we may  quantify its average completion time   by
$\sum_{j}\frac{\pi_{ij}}{ \mu R_{ij}(1-\rho_j(\boldsymbol \pi))}$. 
%This objective optimizes the average job completion time of device $i$. 
Ultimately, we are interested in minimizing  the average completion time over   all devices, which leads to   
%In order to achieve good performance, we will design the associations to  minimize the  squared norm of the load vector: $\|\boldsymbol \rho\|_2^2\equiv {\sum_j \rho_j^2(\boldsymbol\pi)}$. 
%By minimizing squared terms, we ensure that very high $\rho$ are avoided
\emph{device association problem}:\vspace{-0.1in}
\begin{align}
&\min_{ \pi_{ij}\geq 0}\sum_{ij}\frac{\pi_{ij}
%\rho_j(\boldsymbol \pi)
}{ \mu R_{ij}(1-\rho_j(\boldsymbol \pi))}\label{eq:dap}\\
\text{s.t.} & \sum_j \pi_{ij}=1\quad\forall i\in \{1,2,\dots,n\},\notag \\
& \rho_j(\boldsymbol \pi)<1\quad\forall j\in \{1,2,\dots,m\}.\notag
\end{align}
This is a non-convex continuous optimization problem, and we are interested to solve it for large dimensions. 
%\begin{align*}
%&\min_{ \pi_{ij}\geq 0}\sum_j \left(\sum_i  \frac{\lambda_i}{\mu R_{ij}}\pi_{ij}\right)^2\\
%\text{s.t.} & \sum_j \pi_{ij}=1\quad\forall i\in \{1,2,\dots,n\}.
%\end{align*}

%To proceed, we introduce a different device objective, which aims at simplifying the structure of the problem but retaining its crucial characteristics. In particular, we introduce the following utility function:
%\begin{equation}\label{eq:obj}
%g_i(\pi_{ij},\pi_{-ij})=\sum_j \pi_{ij}R_{ij}-\eta_i \sum_j \rho_j^2(\boldsymbol \pi),
%\end{equation}
%where the term $\sum_j \pi_{ij}R_{ij}$ is device-centric and focuses on greedily maximizing the connection rate, while the term $\sum_j \rho_j^2(\boldsymbol \pi)$ is network-centric and balances the base station loads to avoid excessive overloading that could  limit the connection performance. $\eta_i>0$ is a scalar weight that determines which of the two terms  is more important to device $i$.
%We remark that the device only controls $\pi_{ij},\forall i$, but not the  variables $\pi_{-ij}$. 

\subsection{Related Work}

The baseline maxSNR association rule reads ``connect each device to the station with the strongest signal''. This simple association rule works well when the mobile traffic load is low, or symmetrically scattered around stations, but otherwise it can lead to significant load imbalances. An improved rule that captures interference (but not traffic fluctuations) is the maxSINR rule ``connect each device to the station with the highest SINR'', \cite{UACRAN2}.

To improve the performance over the above rules, the problem of device association has been studied in its integral form, where each device can be associated to a single station, i.e. the association variable $\pi_{ij}\in\{0,1\}$, cf.~\cite{boi,andrews13,athanasiou14,UACRAN3}. However, the combinatorial nature of such formulations makes them inefficient for large-scale instances. 
Instead, \cite{deVec} allowed $\pi_j(x)\in [0,1]$ for all points $x$ in the plane  and proved that the optimal solution is integral almost everywhere (except boundary points). In this paper, we directly define the association variables to  take values $\pi_{ij}\in [0,1]$ with the understanding that fractional solutions correspond to multi-station coverage like CoMP \cite{CoMP}. We observe that our optimal solutions are also ``sparse'', meaning that although variables are allowed to take values in $[0,1]$, at optimality most of them will be integral (0 or 1). %This is due to the fact that 

In the context of continuous device association, past work has considered the optimization of $\alpha$-optimal functions \cite{deVec,Spyro} or general convex functions \cite{Liako}. None of the past approaches can be used to solve \eqref{eq:dap}, which is non-convex. In fact,  to the best of our knowledge, solving a large non-convex problem like \eqref{eq:dap} is generally intractable. Our goal in this paper is to propose a useful methodological tool, which can be applied to very large device association problems. 

From the perspective of scalability, most past approaches do not meet the extreme requirements we consider in this paper.  An exception is perhaps the distributed algorithm of \cite{deVec}. A limitation of this prior work, however, is that it depends on the observation of the exact statistical load $\rho_j$, and it is not robust to erroneous estimates. 
%   its elegant solution works specifically for the studied problem in \cite{deVec}, and it can not be extended to more general setups, such as optimizing predicted traffic \cite{Liako}, or multiclass QoS constraints \cite{} which are very important for future wireless networks. 
% propose algorithms that are on one hand efficient, but on the other can not cope with the immense dimensions that we consider in this paper. Another advantage of our approach is that it can be adapted to different

%The same work \cite{deVec} has proposed a distributed solution which can possible scale well. Past work has extended this approach to traffic steering \cite{Spyro}, and robust traffic adaptation \cite{Liako}. In our work

%We will assume that each device is associated to a single RRH, essentially excluding multi-station coverage like CoMP \cite{CoMP} which is possible, but challenging to implement. We mention, however, that our framework can be potentially extended to such complicated  schemes.

%Check \cite{UACRAN1}

\section{Optimal Transport}

\subsection{Introduction to Optimal Transport}

The concepts of OT date back to the French mathematician Gaspard Monge who studied  in 1781 the transportation of sand masses  \cite{monge}, what seems  to be one of the first  linear programming problems studied.

Although the theory of OT has been generalized  to optimization with infinite variables, here we restrict our discussion to the illustrative case of ``discrete OT''\footnote{The notion of discreteness refers to discrete probability measures, hence we have a finite number of continuous transportation variables.}, where probability  mass must be transported between two discrete  distributions $\boldsymbol p\equiv (p_1,\dots,p_m)$ and $\boldsymbol q\equiv (q_1,\dots,q_n)$, and the transportation cost from point $i$ to point $j$ is $C_{ij}$--quite often taken to be the Euclidean distance between the two points. 
%This fundamental transportation problem has been studied before in the context of transporting sand between two locations, moving army battalions in Stalingrad, and finding the flow of minimum cost over a network. 
Due to Kantorovich \cite{kantorovich}, we can describe  the transportation with a coupling $\pi_{ij}$, essentially a joint probability distribution $\boldsymbol\pi$ with marginals $\boldsymbol p$ and $\boldsymbol q$. The discrete OT problem can be written as:
\begin{align}
\min_{\pi_{ij} \geq 0}&\sum_{ij}C_{ij}\pi_{ij} \label{eq:OT}\\
\text{s.t.}& \sum_j \pi_{ij}=p_i,~~i=1,\dots,m,\label{eq:p}\\
& \sum_i \pi_{ij}=q_j,~~j=1\dots,n.\label{eq:q}
\end{align}
This is a linear program,  solvable in polynomial time w.r.t.~its size $mn$. 
Further, consider a bipartite graph connecting the points with links of weight $C_{ij}$, and connect each point $i$ of $\boldsymbol p$ to a virtual source with link  capacity $p_i$, and each point $j$  of $\boldsymbol q$ to  a virtual destination with link capacity $q_j$.
The discrete OT corresponds to finding a minimum cost s-t flow of one unit. 
Using network simplex \cite{orlin91}, we can obtain the solution in $O(E^2\log V)$,\footnote{Pseudo-polynomial algorithms are faster, but their runtime guarantee depends on the values of $C_{ij}$  \cite{orlin97,tarjan}.}
 which for $V=m+n$, $E=mn$, and $n=m$, becomes $O(n^4\log n)$, essentially quadratic to the input size $mn$. Although such solution is polynomial to the input size, our problem is of enormous dimensions, and hence the degree of the polynomial is important as well. Below we describe the regularized OT, a method to approximate OT in $O(n^2 \log n)$.

\subsection{Regularized OT}

The OT can be approximated in an efficient manner using an entropic regularization \cite{Cuturi}.  %proposed to approximate OT with a regularized version. 
We modify the objective of OT by subtracting the  \emph{entropy} $H(\boldsymbol\pi)=-\sum_{ij} {\pi}_{ij}(\log {\pi}_{ij}-1)$, weighted with the \emph{regularization strength coefficient} $\epsilon>0$. The regularized OT becomes:
\begin{align}
\min_{\pi_{ij}}&\sum_{ij}C_{ij}\pi_{ij}+\epsilon\sum_{ij} {\pi}_{ij}(\log {\pi}_{ij}-1) \label{eq:regOT}\\
\text{s.t.}& \sum_j \pi_{ij}=p_i,~~i=1,\dots,m,\notag\\
& \sum_i \pi_{ij}=q_j,~~j=1\dots,n.\notag
\end{align}

\begin{figure}[t!]
  \centering
    \includegraphics[width=0.25\textwidth]{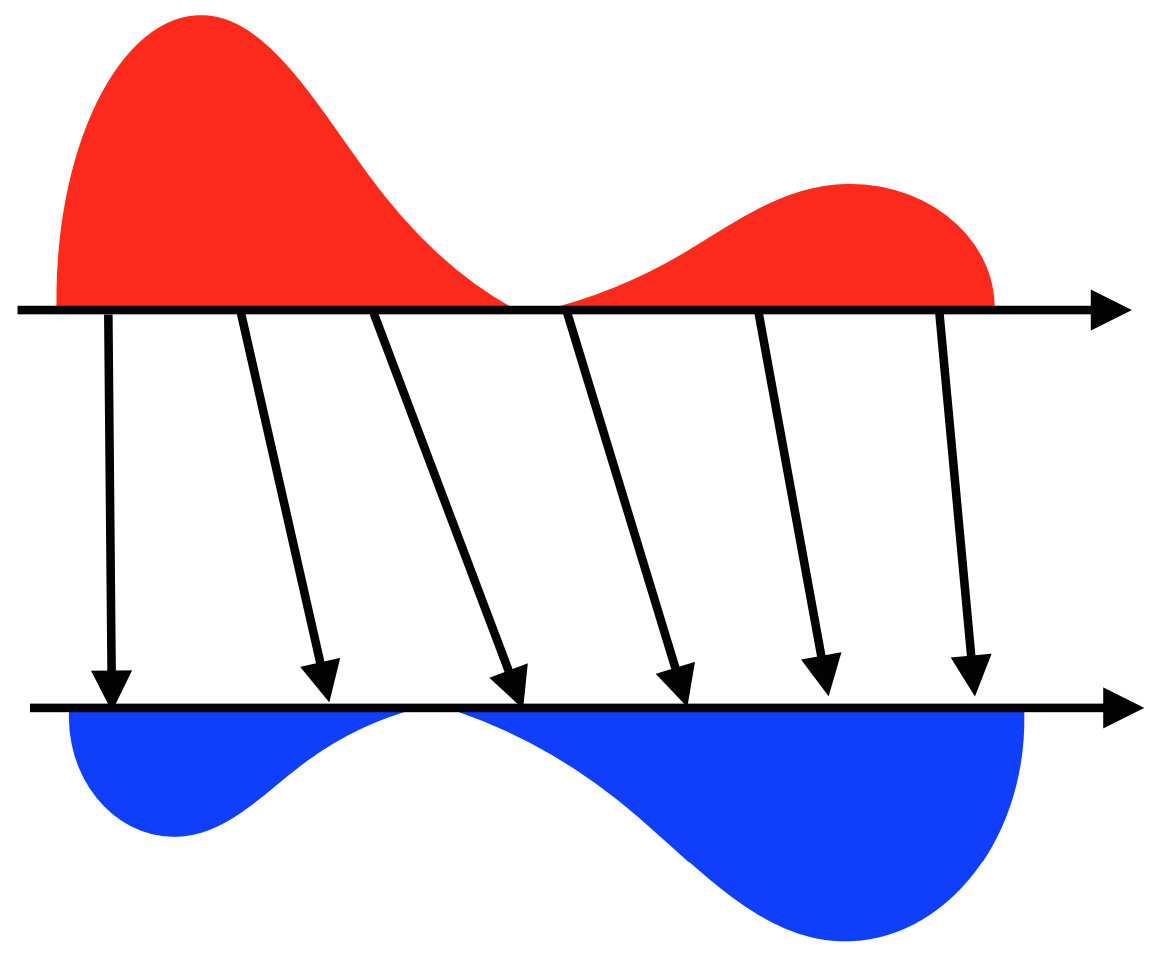}
      \caption{{\small OT studies the mass transportation cost that satisfies initial (red) and final (blue) conditions.}}%\vspace{0.2in}
\end{figure}

Adding $-\epsilon H(\boldsymbol\pi)$ has a number of beneficial effects:
\begin{itemize}
\item The objective of the regularized OT is $1$--strongly convex, hence \eqref{eq:regOT} has  a unique optimal solution.
\item By Proposition 4.1 of \cite{COT_book}, the sequence of unique optimal solutions of \eqref{eq:regOT} for $\epsilon^{(k)}$  converges to an optimal solution to \eqref{eq:OT} as $\epsilon^{(k)}\to 0$. 
\item $H(\boldsymbol\pi)$ forces $\pi_{ij}$ to be non-negative, hence we can drop the constraint $\pi_{ij}\geq 0$.
\end{itemize}

 More importantly, the new objective admits an intuitive reformulation, which leads to a faster algorithm. 
 Consider the affine constraint sets  
% $\mathcal{C}_p=\{\boldsymbol \pi~|~\sum_j \pi_{ij}=p_i,~~i=1,\dots,m\}$,
 $\mathcal{C}_p=\{\boldsymbol \pi~|~\eqref{eq:p}\}$,
  and 
%  $\mathcal{C}_q=\{\boldsymbol \pi~|~\sum_i \pi_{ij}=q_j,~~j=1\dots,n\}$,
   $\mathcal{C}_q=\{\boldsymbol \pi~|~\eqref{eq:q}\}$,
   and let $\xi_{ij}=\exp(-{C_{ij}}/{\epsilon})$;  we can rewrite the regularized OT as: 
\[
\min_{{\boldsymbol \pi} \in \mathcal{C}_p\cup \mathcal{C}_q}  KL({\boldsymbol \pi},\boldsymbol \xi),
\]
where $KL({\boldsymbol \pi},\boldsymbol \xi)\triangleq\sum_{ij} {\pi}_{ij}(\log \frac{{\pi}_{ij}}{\xi_{ij}}-1)$ is the Kullback-Leibler (KL) divergence. Hence, the regularized OT is a KL projection of $\boldsymbol \xi$ onto the intersection of $\mathcal{C}_p$ and $\mathcal{C}_q$ \cite{BenamouEtAl}.
Since the KL divergence is the special case of the  Bregman divergence for the entropy function, our KL projections benefit from the convergence property of iterative  Bregman projections on intersections of affine constraint sets \cite{bregman67,BauschkeLewis}.
To obtain an iterative projection algorithm it is convinient to operate on the dual domain.
 Consider the Lagrangian function
\begin{align*}
L(\boldsymbol\pi,\boldsymbol \alpha, \boldsymbol \beta)&=KL({\boldsymbol \pi},\boldsymbol \xi)+\sum_i \alpha_i \left( \sum_j \pi_{ij}-p_i\right)\\
&\hspace{0.3in}+\sum_j \beta_j \left( \sum_i \pi_{ij}-q_j\right).
\end{align*}
The KKT stationarity condition requires that for each $(i,j)$ it must hold:
\[
\frac{\partial L}{\partial \pi_{ij}}=0 \quad\Leftrightarrow\quad {\pi}_{ij}=\xi_{ij}e^{-\alpha_i }e^{{-\beta_j}}.
\]
To obtain the projection with respect to $\mathcal{C}_p$ we follow the steps: (i) fix $(\beta_j)$, (ii) apply the complementary slackness condition $\sum_j \pi_{ij}=p_i$, (iii) solve for $\alpha_i$. Similarly for $\mathcal{C}_q.$

%To obtain the iterative projections, we fix the multipliers of one of the affine set, e.g., the $(\beta_j)$, and then apply the complementary slackness conditions--in this case $\sum_j \pi_{ij}=p_i$--to obtain the other set of multipliers in this iteration. 
Setting $a_i\equiv e^{-\alpha_i }$ and $b_j\equiv e^{-\beta_j }$, we obtain  the Sinkhorn algorithm  \cite{sinkhorn} for regularized OT:
%, which projects ala-KL the point $\boldsymbol\xi$ iteratively to $\mathcal{C}_p$ and $\mathcal{C}_q$ 

\noindent \rule[0.03in]{3.4in}{0.02in}

\textbf{Sinkhorn Algorithm}

\noindent \rule[0.03in]{3.4in}{0.02in}

\begin{algorithm}%[H]
 \SetKwInOut{Input}{Input}\SetKwInOut{Output}{Output}
 \Input{ $\boldsymbol C$,~$\boldsymbol p$,~$\boldsymbol q$,~$\epsilon$}
\Output{$\boldsymbol\pi$}
 initialize~~ $\boldsymbol b^{(0)}=\boldsymbol 1$, $\xi_{ij}=e^{-C_{ij}/\epsilon}$\;
 \While{accuracy}{
    $k\leftarrow k+1$ \;
    $a_i^{(k)} \leftarrow\frac{p_i}{\sum_j b_j^{(k-1)}\xi_{ij}},\quad \forall i$ \;
    $b_j^{(k)} \leftarrow\frac{q_j}{\sum_i a_i^{(k)}\xi_{ij}},\quad \forall j$ \;
%    $e^{a_i }={e^{-{b_j}}}{\sum_j\xi_{ij}$ \;
%    $e^{b_j }={e^{-{a_i}}}{\sum_i\xi_{ij}$ \;
%  
%  \eIf{understand}{
%   go to next section\;
%   current section becomes this one\;
%   }{
%   go back to the beginning of current section\;
%
%  }
 }
 $\pi_{ij}\leftarrow \xi_{ij}a_i^{(k)} b_j^{(k)},\quad \forall (i,j)$
% \caption{How to write algorithms}
\end{algorithm}

\noindent \rule[0.05in]{3.4in}{0.02in}
%

%\[{\pi}_{ij}=\xi_{ij}e^{-a_i }e^{{-b_j}}\]
%
%\[ \sum_j {\pi}_{ij}=p_i\]
%
%
%\[\sum_j\xi_{ij}e^{-a_i }e^{{-b_j}}=p_i\]
%
%\[e^{-a_i^{(k)} }=\frac{p_i}{\sum_j e^{-b_j^{(k)}}\xi_{ij}}\]
%
\begin{theorem}[From \cite{Altschuler17}]
Assume $m=n$, fix $\tau>0$, and choose $\gamma=\small \frac{4\log n}{\tau}$, Sinkhorn algorithm computes a $\tau$--approximate  solution of  \eqref{eq:OT} in $O(n^2\log n \tau^{-3})$ operations. 
\end{theorem}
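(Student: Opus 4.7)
The plan is to decompose the $\tau$-approximation error into three controllable pieces: (i) the regularization bias introduced by swapping \eqref{eq:OT} for \eqref{eq:regOT} with parameter $\gamma = 1/\epsilon$, (ii) the residual violation of the marginal constraints by the Sinkhorn iterate after $k$ steps, and (iii) the cost inflation caused by a final rounding step that projects the Sinkhorn output onto the transport polytope $\mathcal{C}_p \cap \mathcal{C}_q$. Because each iteration of the algorithm consists only of two diagonal rescalings of the $n\times n$ matrix $\xi_{ij} = \exp(-\gamma C_{ij})$, the per-iteration cost is $O(n^2)$, so the whole argument reduces to bounding the iteration count by $O(\log n \cdot \tau^{-3})$ and the rounding-plus-bias error by $O(\tau)$.

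First I would control the regularization bias. Letting $\boldsymbol\pi^\star$ and $\boldsymbol\pi^\star_\gamma$ denote the optima of \eqref{eq:OT} and \eqref{eq:regOT}, the entropy of any coupling supported on $[n]\times[n]$ lies in $[0, 2\log n]$, so comparing the two optima directly on each other's feasible points gives $|\langle \boldsymbol C, \boldsymbol\pi^\star_\gamma\rangle - \langle \boldsymbol C, \boldsymbol\pi^\star\rangle| = O(\log n / \gamma)$ after normalizing $\|\boldsymbol C\|_\infty$. The prescription $\gamma = 4\log n/\tau$ is calibrated precisely so that this bias is $O(\tau)$.

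Next I would bound the number of Sinkhorn steps required so that the iterate $\boldsymbol\pi^{(k)}$ has row and column marginals within $\eta$ of $(\boldsymbol p, \boldsymbol q)$ in $\ell_1$. Since each half-step is an exact Bregman (KL) projection onto an affine set, the dual objective improves monotonically; quantifying the per-step improvement via Pinsker's inequality, together with an $\ell_\infty$ bound on the dual potentials $\alpha_i, \beta_j$ of order $O(\gamma\|\boldsymbol C\|_\infty) = O(\log n/\tau)$, yields an iteration count of $O(\log n \cdot \eta^{-2})$. A standard rounding lemma then takes any such approximately-feasible iterate and produces $\hat{\boldsymbol\pi} \in \mathcal{C}_p \cap \mathcal{C}_q$ in $O(n^2)$ time while inflating the objective by at most $O(\eta\|\boldsymbol C\|_\infty)$. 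Setting $\eta = \Theta(\tau)$ and summing the three contributions delivers a feasible $\tau$-approximate solution in $O(n^2 \log n \cdot \tau^{-3})$ operations.

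The main obstacle is the per-iteration KL-decrease bound: a generic analysis of alternating projections only gives sublinear rates, and the advertised $\tau^{-3}$ scaling hinges on the fact that once the entropic smoothing fixes $\gamma$, the Sinkhorn potentials $(\alpha, \beta)$ stay confined in an $\ell_\infty$-ball of radius $O(\log n/\tau)$. Establishing this boundedness uniformly across iterations---and not merely at the fixed point---is the delicate step, and it is what allows Pinsker's inequality to translate $\eta$-level marginal violations into an additive $\Omega(\eta^2)$ gain in dual objective, thereby giving the quadratic-in-$\eta$ iteration count that the theorem requires.
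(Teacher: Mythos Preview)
The paper does not supply its own proof of this theorem: it is quoted verbatim as a result from \cite{Altschuler17}, and no argument is given beyond the citation. Your sketch is a faithful outline of the Altschuler--Weed--Rigollet proof itself---the three-way error split into regularization bias, marginal violation, and rounding inflation, the $O(\log n/\gamma)$ entropy bound, the $O(\gamma\|\boldsymbol C\|_\infty/\eta^2)$ iteration count obtained from a Pinsker-type lower bound on the per-step dual progress, and the choice $\eta=\Theta(\tau)$---so there is nothing to compare against in the present paper, and your approach is correct and standard.
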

%For good convergence performance we select $\gamma=\small \frac{4\log n}{\tau}$, where $\tau$ is the required approximation.

Since the problem size is $n^2$, Sinkhorn algorithm converges almost linearly for any fixed $\tau$ (less the logarithmic term). 
This is to be contrasted with the almost quadratic convergence of network simplex $O(n^4\log n)$.
Further, Sinkhorn requires only matrix-vector multiplications, hence it admits highly efficient  GPU implementations. 
More information on computational transport can be found in \cite{COT_book}.
%and (ii) on regular grids it requires only convolutions, and it becomes linear to $m$.

\subsection{Sinkhorn vs LP}

We implemented Sinkhorn in python and compared its performance to the embedded glpk  solver \cite{glpk},  known to be one of the fastest LP solvers.  Table \ref{tab:perf}  provides some indicative numerical results to highlight the  advantageous performance of Sinkhorn over  a standard LP solver, namely: (i) it is faster, (ii) scales better, and (iii) doesn't run out of memory.

In the experiments, we stopped Sinkhorn when the total absolute  residual  becomes less than $1\%$ or $0.1\%$. The runtimes provided are averages over 7 runs, computed with the \texttt{timeit} package. Notably, we can compute a 1.001-optimal transport $25\times 10k$ in half  a second.

Fig.~\ref{fig:fig1-lpa}-\ref{fig:fig1-ot1a} showcase associations obtained by Sinkhorn and LP-glpk, where we can verify the fidelity of Sinkhorn to the optimal LP solution. Fig.~\ref{fig:fig1-ot2a} showcases a very large instance that the LP solver cannot address.
\vspace{0.07in}
\begin{figure}[t!]
    \centering
    \begin{subfigure}[b]{0.15\textwidth}
        \includegraphics[width=\textwidth]{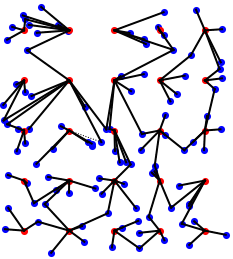}
        \caption{LP, $100 \times 25$}
        \label{fig:fig1-lpa}
    \end{subfigure} %add desired spacing between images, e. g. ~, \quad, \qquad, \hfill etc. 
      %(or a blank line to force the subfigure onto a new line)
    \begin{subfigure}[b]{0.15\textwidth}
        \includegraphics[width=\textwidth]{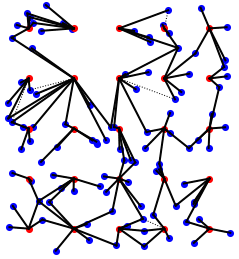}
        \caption{Reg OT, $100\times 25$}
        \label{fig:fig1-ot1a}
    \end{subfigure}
         %add desired spacing between images, e. g. ~, \quad, \qquad, \hfill etc. 
      %(or a blank line to force the subfigure onto a new line)
    \begin{subfigure}[b]{0.17\textwidth}
        \includegraphics[width=\textwidth]{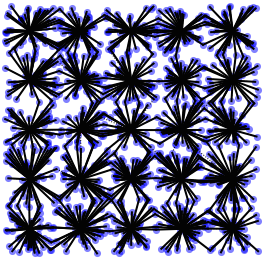}
        \caption{Reg OT, $1k\times 25$}
        \label{fig:fig1-ot2a}
    \end{subfigure}\vspace{0.05in}
     \caption{{\small Device association to  RRHs.}} %This project develops a unifying theoretical framework for such scenaria.}}
     \label{fig:05a}\vspace{0.05in}
\end{figure}

\begin{table}[t!]
\begin{center}
  \begin{tabular}{ ccccc }
    Devices & RRHs & \emph{LP-glpk} & $\stackrel{\textit{Sinkhorn}}{\tau=1\%}$  & $\stackrel{\textit{Sinkhorn}}{\tau=0.1\%}$\\ \hline     \hline
    10          &       25 &              6&     2.27 &            5.02  \\ \hline
    50          &       25 &              88&     4.09 &            8.95  \\ \hline
    100        &       25 &              315&     8.09 &            9.7  \\ \hline
    500        &       25 &              8130&    10.1 &            31.6  \\ \hline
    1000      &       25 &              out of memory&     27 &            37.9  \\ \hline
    5000      &       25 &              out of memory&     135 &            204  \\ \hline
    10000    &       25 &              out of memory&      434 &            568 \\
    \hline
  \end{tabular}
\end{center}
\caption{{\small Runtime (msec) comparison Sinkhorn vs LP-glpk.}}
\label{tab:perf}
\end{table}

\section{OT as a Heuristic Load Balancer}

In this section we explain how one can use  OT algorithms to obtain   device association rules. 
An instance of the OT problem is defined by the triplet $(\boldsymbol C, \boldsymbol p, \boldsymbol q)$, i.e., the costs and the left-right marginals. In order to produce a load balancer based on OT, we must provide appropriate values for these parameters. We propose the following choices:
\begin{compactitem} 
\item For $\boldsymbol C$: we propose to use (i) the Euclidean distance between the location $x_i$ of device $i$ and the location $x_j$ of RRH $j$, i.e., $C_{ij}=\|x_i-x_j\|$, or (ii) the incurred load per unit traffic $C_{ij}=1/{\mu R_{ij}}$.
\item For $\boldsymbol p$: the input traffic, $p_i=\lambda_i$. 
%Notice, that $\boldsymbol \lambda$ is not a distribution, but this is not problematic for Sinkhorn.
\item For $\boldsymbol q$: we propose to use (i) equal RRH traffic $q_j=\sum_i\lambda_i/n$, (where  $n$ is the number of RRHs) or (ii) RRH traffic from the maxSINR rule, $\Lambda_j=\sum_i\pi_{ij}^{SINR}\lambda_i$.
\end{compactitem}

We provide some explanations about the choices. First, the left marginal $\boldsymbol p$ ensures that the entire traffic of each devices is split among RRHs. To choose $\boldsymbol q$ wisely, we should know the RRH traffic at optimality, however this information is often not accessible. Instead, it is easy to obtain the maxSINR rule. Last, Euclidean cost favors  nearby RRHs, while normalized load  connects devices to the RRHs with minimum  incurred load, taking into account interference and path loss. 
We have the following interesting result:
\begin{lemma}
Consider the maxSINR rule denoted with $\pi_{ij}^{SINR}$, and the incurred traffic $\Lambda_j=\sum_i\pi_{ij}^{SINR}\lambda_i$, and suppose it is  feasible, i.e., $\rho_j(\boldsymbol \pi^{SINR})<1,~\forall j$.
Also, consider a $(\boldsymbol C, \boldsymbol p, \boldsymbol q)$ instance of the OT problem, such that (i) $C_{ij}=1/{\mu R_{ij}}$, (ii) $ p_i=\lambda_i$, (iii) $q_j=\Lambda_j$, with solution $\boldsymbol x^*$.
%\Lambda_j=\sum_i\pi_{ij}^{SINR}\lambda_i$.

 Then, the association $\pi_{ij}^*\doteq  x_{ij}^*/\lambda_i$ minimizes the total load $\sum_j \rho_j(\boldsymbol\pi)$.
\end{lemma}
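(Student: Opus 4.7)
The plan is to show that the total-load minimization is a linear program which is exactly the OT problem in the lemma with its right-marginal constraint relaxed, and then argue that the maxSINR rule simultaneously attains both optima, so the extra right-marginal is automatically tight.

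First I would rewrite
\[
\sum_j \rho_j(\boldsymbol\pi) \;=\; \sum_{ij} \frac{\lambda_i}{\mu R_{ij}}\,\pi_{ij} \;=\; \sum_{ij} C_{ij}\,\lambda_i \pi_{ij},
\]
and perform the change of variables $x_{ij} := \lambda_i\pi_{ij}$. Assuming (without loss of generality) $\lambda_i>0$, the association constraints $\sum_j\pi_{ij}=1$, $\pi_{ij}\geq 0$ translate into exactly the OT left-marginal constraints $\sum_j x_{ij}=\lambda_i=p_i$, $x_{ij}\geq 0$. So minimizing the total load over associations is the linear program $\min_x \sum_{ij} C_{ij} x_{ij}$ subject to the left marginal only; call its optimum $V_L$. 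The OT in the lemma adds the further right-marginal constraint $\sum_i x_{ij} = \Lambda_j$, restricting the feasible set, hence its optimum $V_{OT}$ satisfies $V_{OT}\geq V_L$.

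The key step is the reverse inequality. The left-marginal-only LP decouples over rows $i$: each row is independently minimized by concentrating all the mass $\lambda_i$ at $\arg\min_j C_{ij}=\arg\max_j R_{ij}$, which is precisely the maxSINR rule. Thus $x^{SINR}_{ij}:=\lambda_i \pi^{SINR}_{ij}$ attains $V_L$. But by the very definition of $\Lambda_j=\sum_i \pi^{SINR}_{ij}\lambda_i$, this same $x^{SINR}$ also satisfies $\sum_i x^{SINR}_{ij}=\Lambda_j$, so it is OT-feasible and witnesses $V_{OT}\leq V_L$. Therefore $V_{OT}=V_L$.

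To conclude, for any OT optimizer $\boldsymbol x^*$ the candidate $\pi^*_{ij}=x^*_{ij}/\lambda_i$ is association-feasible, and its total load equals $\sum_{ij} C_{ij} x^*_{ij}=V_{OT}=V_L$, i.e.\ the association minimum; hence $\boldsymbol\pi^*$ minimizes $\sum_j\rho_j(\boldsymbol\pi)$. There is no real technical obstacle: the whole argument rides on the observation that rescaling $\pi$ by $\lambda$ turns the load objective into a transport cost whose single-marginal relaxation is already solved by maxSINR, so the choice $q_j=\Lambda_j$ makes the right-marginal constraint tight at no cost, and the conclusion holds for \emph{every} OT optimizer despite possible non-uniqueness. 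The only caveat worth flagging is devices with $\lambda_i=0$, which can be excluded or assigned arbitrarily without affecting either objective.
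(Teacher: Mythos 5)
Your proof is correct and follows essentially the same route as the paper's: after the change of variables $x_{ij}=\lambda_i\pi_{ij}$, you observe that the maxSINR rule solves the left-marginal-only (per-device decoupled) problem and that, by the definition of $\Lambda_j$, the scaled maxSINR solution is feasible for the OT instance, hence OT-optimal, so every OT optimizer yields an association attaining the minimum total load. Your write-up is merely more explicit than the paper about the relaxation inequality $V_{OT}\geq V_L$ and the $\lambda_i=0$ caveat, which is a welcome clarification but not a different argument.
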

\begin{proof}
First, note that whenever the maxSINR rule  $\pi_{ij}^{SINR}$ is feasible, it minimizes the total load. This is easy to check by  observing that the total load contribution of each device $\lambda_i\sum_j \pi_{ij}^{SINR}/\mu R_{ij}$ is minimized under the maxSINR rule. Then the total load minimization follows by summing up over devices.

The OT solution of the lemma is defined as follows: 
\begin{align}
\boldsymbol x^*\in\arg\min_{x_{ij} \geq 0}&\sum_{ij}\frac{x_{ij}}{\mu R_{ij}} \label{eq:inst}\\
\text{s.t.}& \sum_j x_{ij}=\lambda_i,~~i=1,\dots,m,\notag\\
& \sum_i x_{ij}=\Lambda_j,~~j=1\dots,n.\notag
\end{align}
Note that $\lambda_i\pi_{ij}^{SINR}$ satisfies all the constraints  of \eqref{eq:inst}, and thus it is a feasible solution of \eqref{eq:inst}. Further, since $\pi_{ij}^{SINR}$ minimizes the total load, $\lambda_i\pi_{ij}^{SINR}$ must be an optimal solution of  \eqref{eq:inst}. Therefore:
\[
\sum_{ij}\pi_{ij}^*\frac{\lambda_i}{\mu R_{ij}}=\sum_{ij}\pi_{ij}^{SINR}\frac{\lambda_i}{\mu R_{ij}}=\min_{\boldsymbol\pi}\sum_j \rho_j(\boldsymbol\pi),
\]
and the lemma follows.
\end{proof}

%We will establish that the maxSINR rule  $\pi_{ij}^{SINR}$ is a solution to the $(\boldsymbol C, \boldsymbol p, \boldsymbol q)$ instance of the OT of the lemma. Then the proof follows from the fact that the maxSINR rule minimizes the total load whenever feasible.

The lemma states that given the RRH traffic under the maxSINR rule  denoted with $\Lambda_j$,  we may use OT to retrieve the  association variables that minimize the total load, a very useful property. %More broadly, OT can be used to find the associations that produce certain RRH traffic patterns.

Choosing the marginals like in Lemma 1 (or with a more elaborate scheme as we show below) will work better in practice, however, the choices given at the start of the section can also be useful: they are simpler to compute and can be sufficient when the spatial traffic is uniform. To showcase this property, we experiment with  a scenario with random but uniform traffic. In a C-RAN cell with 25 RRHs we scale the number of devices from  100 to 5000. 
We compare the average completion time under 4 policies, (a) maxSINR, (b) OT with Euclidean costs and equal RRH traffic, (c) OT with load costs and equal RRH traffic, and (d) OT with load costs and RRH traffic equal to $\Lambda_j$ (same as Lemma 1). The values shown in  Fig.~\ref{fig:fig3-comp1} are the ratios of average completion time between the algorithm and the maxSINR. We observe that algorithm (d) performs the same with maxSINR (a) as predicted by the Lemma. The other two algorithms perform similarly. In particular, for a small number of devices, we see that selecting $q_j=\sum_i\lambda_i/n$ results in performance deterioration (up to 4 times worse), due to the random locations of the devices. However, as the number of devices increases, the traffic becomes more uniform and the choice $q_j=\sum_i\lambda_i/n$  becomes as good as the maxSINR solution. 

\begin{figure}
\centering
\includegraphics[width=0.8\columnwidth]{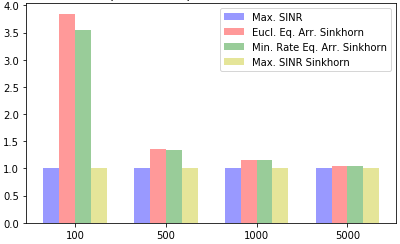}
        \caption{{\small Relative completion time (uniform traffic)}}
        \label{fig:fig3-comp1}
\end{figure}

\begin{figure*}[h!]
    \centering
    %\begin{subfigure}[b]{0.28\textwidth}
     %   \includegraphics[width=\textwidth]{figures/FigOT_RS_UniTraf}
      %  \caption{Completion Time (uniform traffic)}
       % \label{fig:fig3-comp1}
    %\end{subfigure}
 %add desired spacing between images, e. g. ~, \quad, \qquad, \hfill etc. 
      %(or a blank line to force the subfigure onto a new line)
    \begin{subfigure}[b]{0.22\textwidth}
        \includegraphics[width=\textwidth]{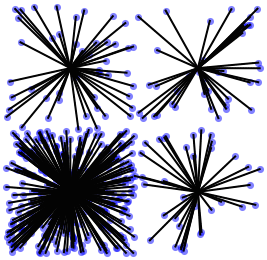}
        \caption{maxSINR}
        \label{fig:fig3-maxSINR}
    \end{subfigure}\quad
 %add desired spacing between images, e. g. ~, \quad, \qquad, \hfill etc. 
      %(or a blank line to force the subfigure onto a new line)
    \begin{subfigure}[b]{0.22\textwidth}
        \includegraphics[width=\textwidth]{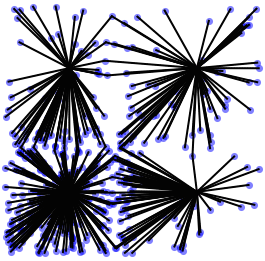}
        \caption{Adaptive  Sinkhorn}
        \label{fig:fig3-Sink}
    \end{subfigure}\quad
        \begin{subfigure}[b]{0.352\textwidth}
        \includegraphics[width=\textwidth]{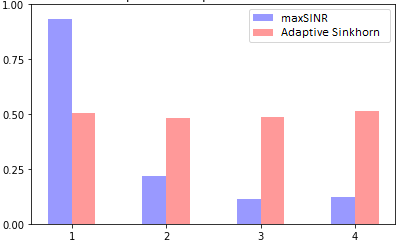}
        \caption{Load per station}
        \label{fig:fig3-comp2}
    \end{subfigure}\vspace{0.1in}
     \caption{{\small Comparison of maxSINR and Adaptive Sinkhorn (non-uniform traffic).}} %This project develops a unifying theoretical framework for such scenaria.}}
     \label{fig:05}
\end{figure*}\vspace{0.05in}

%We choose the cost $C_{ij}$ to equal the Euclidean distance between the location $x_i$ of device $i$ and the location $x_j$ of RRH $j$, i.e., $C_{ij}=\|x_i-x_j\|$. The left marginal $\boldsymbol p$ is set equal to the input traffic, i.e., $p_i=\lambda_i$. Notice, that $\boldsymbol \lambda$ is not a distribution, but this is not problematic for Sinkhorn. Last, we choose the right marginal $\boldsymbol q$ to be uniform: $q_j=\sum_i\lambda_i/|\mathcal{J}|$.

\section{Learning RRH Load}

%As explained before, with knowledge of well-performing RRH aggregate traffic patterns, OT can be used to efficiently produce the corresponding associations.
We now turn our attention to non-uniform traffic, as for example in figures \ref{fig:fig3-maxSINR}-\ref{fig:fig3-Sink}. In this setting, both the maxSINR rule and our OT heuristics  will overload the hot spot station. To effectively balance the load we must discover the correct traffic accumulation at RRHs which will provide the average completion time optimization.

We propose an iterative algorithm, where at iteration $k$ Sinkhorn algorithm is used with $\boldsymbol q^{(k)}$ to obtain an association which results in a specific RRH loading $\rho_j(\boldsymbol \pi^{(k)})$. According to this loading, a new marginal $\boldsymbol q^{(k+1)}$ is computed, and the process repeats until an accuracy criterion is satisfied.
More specifically, our iterative algorithm picks the RRH with the highest load (step 5) and then decreases its aggregate traffic by a fixed term $\delta$, dispersing the traffic to all other RRHs (step 6). We mention that increasing the traffic in a remote RRH will result in a large number of association changes in the Sinkhorn algorithm which will ensure that the steered traffic maintains minimum transportation cost. We provide the algorithm flow here.

\noindent \rule[0.03in]{3.4in}{0.02in}

\textbf{Adaptive Sinkhorn Association}

\noindent \rule[0.03in]{3.4in}{0.02in}

\begin{algorithm}%[H]
 \SetKwInOut{Input}{Input}\SetKwInOut{Output}{Output}
 \Input{ $C_{ij}=1/\mu R_{ij}$,~$\boldsymbol p=\boldsymbol \lambda$,~$\epsilon$}
\Output{$\boldsymbol\pi$}
 initialize~~ $q_j=\sum_i\lambda_i/n$\;
 \While{accuracy}{
 $k++$ \;
    $\boldsymbol \pi^{(k+1)}\leftarrow \text{Sinkhorn}(\boldsymbol C,\boldsymbol p,\boldsymbol q^{(k)},\epsilon)$ \;
    $j^*\in \arg\max\{\rho_j(\boldsymbol \pi^{(k+1)})\}$ \;
$q^{(k+1)}_j=\left\{\begin{array}{ll}
q^{(k)}_j-\delta & j=j^* \\
q^{(k)}_j+\delta/(n-1) & j\neq j^* 
\end{array}\right.$\;
%    $,\quad \forall j$ \;
%    $e^{a_i }={e^{-{b_j}}}{\sum_j\xi_{ij}$ \;
%    $e^{b_j }={e^{-{a_i}}}{\sum_i\xi_{ij}$ \;
%  
%  \eIf{understand}{
%   go to next section\;
%   current section becomes this one\;
%   }{
%   go back to the beginning of current section\;
%
%  }
 }
 $\boldsymbol\pi \leftarrow \boldsymbol \pi^{(k+1)} $
% $\pi_{ij}\leftarrow \xi_{ij}a_i^{(k)} b_j^{(k)},\quad \forall (i,j)$
% \caption{How to write algorithms}
\end{algorithm}

\noindent \rule[0.05in]{3.4in}{0.02in}

%We consider an interior point method, where the constraints are added as penalties to the objective. In particular we will try to find the stationary point of the following Lagrangian function:
%\[
%\epsilon\int \sum_i \left(\widetilde{\pi}_{i}(x)\log\widetilde{\pi}_{i}(x)-\widetilde{\pi}_{i}(x)\right)dx+
%U(\boldsymbol \pi)+
%\int \xi(x) \left(\int \frac{1}{R_i(x)}\widetilde{\pi}_{i}(x)dx- \rho_i\right)dx+\sum_i \zeta_i \left(\sum_i \widetilde{\pi}_{i}(x)-\frac{\lambda(x)}{\mu}\right)
%\]
%where the term $\epsilon\int \sum_i \left(\widetilde{\pi}_{i}(x)\log\widetilde{\pi}_{i}(x)-\widetilde{\pi}_{i}(x)\right)dx$ is an entropic regularizer to ensure the satisfaction of constraint  $\widetilde{\pi}_{i}(x)\geq 0$ but also to ensure convexity.
%
%Taking the partial derivative  wrt $\widetilde{\pi}_{i}(x)$ we obtain the following stationarity condition:
%\[
%\epsilon \log\widetilde{\pi}_{i}(x)-\epsilon+\frac{\partial U(\boldsymbol{\widetilde\pi})}{\partial \widetilde{\pi}_i(x)}+\frac{\xi(x)}{R_i(x)}+\zeta_i = 0,~~\forall x,i
%\]
%

Figures \ref{fig:fig3-maxSINR}-\ref{fig:fig3-comp2} show the results. First, comparing the two association rules we see that although the maxSINR rule associates the devices according to  interference and not traffic, our adaptive Sinkhorn algorithm considers both, and converges to an association where some cell edge devices are steered to the neighboring RRHs. This is done to alleviate the load of the bottom-left RRH. Indeed,  figure \ref{fig:fig3-comp2} shows the resulting loads of the 4 RRH. Our approach successfully equalizes the loads of the different RRH, while the maxSINR rule fails to do so. Ultimately, our scheme achieves an average completion time 6.3msec while the maxSINR rule 24msec, which corresponds to almost 4 times improvement.

\section{Conclusion}
In this paper, we studied the device association in C-RAN, and proposed an iterative algorithm to adjust the loads based on the theory of Optimal Transport. Specifically, we showed that an extension of the Sinkhorn algorithm for C-RAN systems can provide low delay associations for thousands of users in 0.5sec.  Our methodology scales to very large problem instances, and has the potential to provide great improvements over the simple baseline approach.

%

%\bibliographystyle{abbrv}
%\bibliography{mybib_v12}
\end{document}